\documentclass[envcountsame]{llncs}
\usepackage{enum_kernel}
\pagestyle{plain}

\title{Paradigms for 
Parameterized
Enumeration\thanks{%
Supported by a Campus France/DAAD Procope grant, Campus France Projet No~28292TE, DAAD Projekt-ID 55892324.
}}

\author{
  Nadia Creignou\inst{1}\and
  Arne Meier\inst{2}\and
  Julian-Steffen M\"uller\inst{2}\and
  Johannes Schmidt\inst{3}\and
  Heribert Vollmer\inst{2}
}

\institute{
Aix-Marseille Université~%
\email{nadia.creignou@lif.univ-mrs.fr}\and
Leibniz Universit\"at Hannover~%
\email{$\{$meier,mueller,vollmer$\}$@thi.uni-hannover.de}\and
Link{\"o}ping University~%
\email{johannes.schmidt@liu.se}}

\begin{document}

\maketitle

\begin{abstract}
The aim of the paper is to examine  the  computational  complexity and algorithmics of enumeration, the task to output all solutions of a given problem, from the point of view of parameterized complexity. 
First we define formally different notions of efficient enumeration in the context of parameterized complexity. Second we  show how different algorithmic paradigms can be used in order to get parameter-efficient enumeration algorithms in a
number of examples.
These paradigms use well-known principles from the design of parameterized decision as well as  enumeration techniques, like for instance kernelization and self-reducibility.
The concept of kernelization, in particular, leads to a characterization of fixed-parameter tractable enumeration problems.
\end{abstract}

\section{Introduction}

This paper is concerned with algorithms for and complexity studies of enumeration problems, the task of generating all solutions of a given computational problem. The area of enumeration algorithms has experienced tremendous growth over the last decade. Prime applications are query answering in databases and web search engines, data mining, web mining, bioinformatics and computational linguistics. 

Parameterized complexity theory provides a framework for a refined analysis of hard algorithmic problems. It measures complexity not only in terms of the input size, but in addition in terms of a parameter. Problem instances that exhibit structural similarities will have the same or similar parameter(s). Efficiency now means that for fixed parameter, the problem is solvable with reasonable time resources. A parameterized problem is fixed-parameter tractable (in $\FPT$) if it can be solved in polynomial time for each fixed value of the parameter, where the degree of the polynomial does not depend on the parameter. Much like in the classical setting, to give evidence that certain algorithmic problems are not in $\FPT$ one shows that they are complete for superclasses of $\FPT$, like the classes in what is known as the W-hierarchy.

Our main goal is to initiate a study of enumeration from a parameterized complexity point of view and in particular to develop parameter-efficient enumeration algorithms. Preliminary steps in this direction have been undertaken by H.~Fernau \cite{fernau02}. He considers algorithms that output \emph{all} solutions of a problem to a given instance in polynomial time for each fixed value of the parameter, where, as above, the degree of the polynomial does not depend on the parameter (let us briefly call this fpt-time). We subsume problems that exhibit such an algorithm in the class $\totalFPT$. (A similar notion was studied by Damaschke \cite{damaschke06}). Algorithms like these can of course only exists for algorithmic problems that possess only relatively few solutions for an input instance. We therefore consider algorithms that exhibit a delay between the output of two different solutions of fpt-time, and we argue that this is the ``right way'' to define tractable parameterized enumeration. The corresponding 
complexity class is called $\delayFPT$.

We then study the techniques of kernelization (stemming from parameterized complexity) and self-reducibility (well-known in the design of enumeration algorithms) under the question if they can be used to obtain parameter-efficient enumeration algorithms. We study these techniques in the context of different algorithmic problems from the context of propositional satisfiability (and vertex cover, which can, of course, also be seen as a form of weighted 2-CNF satisfiability question). We obtain a number of upper and lower bounds on the enumerability of these problems.

In the next section we introduce parameterized enumeration problems and suggest four hopefully reasonable complexity classes for their study. In the following two sections we study in turn kernelization and self-reducibility, and apply them to the problems \VC, $\maxonesSAT$ and detection of strong Horn-backdoor sets. We conclude with some open questions about related algorithmic problems.

\section{Complexity Classes for Parameterized Enumeration}

Because of the amount of solutions that enumeration algorithms possibly produce, the size of their output is often much larger (e.g., exponentially larger) than the size of their input. Therefore, polynomial time complexity is not a suitable yardstick of efficiency when analyzing their performance. As it is now agreed, one is more interested in the regularity of these algorithms rather than in their total running time. For this reason, the efficiency of an enumeration algorithm is better measured by the delay between two successive outputs, see e.g., \cite{JohnsonPY88}. The same observation holds within the context of parametrized complexity and we can define parameterized complexity classes for enumeration based on this time elapsed between two successive outputs. Let us start with the formal definition of a parameterized enumeration problem.

\begin{definition}\label{def:para-enum-pb}
	A \emph{parameterized enumeration problem}  (over a finite alphabet $\Sigma$)  is a triple $E=(Q, \kappa, \Sol)$ such that
	\begin{itemize}
		\item	$Q\subseteq \Sigma^*$,
		\item $\kappa$ is a parameterization of $\Sigma^*$, that is $\kappa\colon \Sigma^*\rightarrow \N$ is a polynomial time computable function.
		\item $\Sol:\Sigma^*\rightarrow \mathcal{P} (\Sigma^*)$ is a function such that for all $x\in\Sigma^*$, $\Sol(x)$ is a finite set and $\Sol(x)\ne\emptyset$ if and only if $x\in Q$. 
	\end{itemize}
\end{definition}

If $E=(Q, \kappa, \Sol)$ is a parameterized enumeration problem over the alphabet $\Sigma$, then we call strings $x\in\Sigma^*$ instances of $E$, the number $\kappa(x)$ the corresponding parameter, and $\Sol(x)$ the set of solutions of $x$. 
%
As an example we consider the problem of enumerating all vertex covers with bounded size of a graph.
%
\enumproblem
{\allVC}
{An undirected graph $G$ and a positive integer $k$}
{$k$}
{The set of all vertex covers of $G$ of size $\le k$}
%

An \emph{enumeration algorithm} $\mathcal{A}$ for the enumeration problem $E=(Q, \kappa, \Sol)$ is an algorithm, which on the input $x$ of $E$, outputs exactly the elements of $\Sol(x)$ without duplicates, and which terminates after a finite number of steps on every input.

At first we need to fix the notion of delay for algorithms.
\begin{definition}[Delay]
 Let $E=(Q,\kappa,\Sol)$ be a parameterized enumeration problem and $\mathcal A$ an enumeration algorithm for $E$.
 Let $x\in Q$, then we say that the $i$-th delay of $\mathcal A$ is the time between outputting the $i$-th and $(i+1)$-st solutions in $\Sol(x)$. Further, we define the $0$-th delay as the \emph{precalculation time} as the time from the start of the computation to the first output statement. Analogously, the $n$-th delay, for $n=|\Sol(x)|$, is the \emph{postcalculation time} which is the time needed after the last output statement until $\mathcal A$ terminates.
\end{definition}
We are now ready to define different notions of fixed-parameter tractability for enumeration problems.
\begin{definition}
 Let $E=(Q, \kappa, \Sol)$ be a parameterized enumeration problem and $\mathcal{A}$ an enumeration algorithm for $E$.
\begin{enumerate}
 \item The algorithm $\mathcal{A}$ is a $\totalFPT$ algorithm if there exist a computable function $t\colon \N\rightarrow \N$ and a polynomial $p$  such that for every instance $x\in\Sigma^*$, $\mathcal{A}$ outputs all solutions of $\Sol(x)$ in time at most $t(\kappa(x))\cdot p(|x|)$.
 \item The algorithm $\mathcal{A}$ is a $\delayFPT$ algorithm if there exist a computable function $t\colon \N\rightarrow \N$ and a polynomial $p$ such that for every $x\in\Sigma^*$, $\mathcal{A}$ outputs all solutions of $\Sol(x)$ with delay of at most $t(\kappa(x))\cdot p(|x|)$.
\end{enumerate}
\end{definition}

Though this will not be in the focus of the present paper, we remark that, in analogy to the non-parameterized case (see \cite{CreignouOS11,Schmidt09}), one can easily adopt the definition for $\incFPT$ algorithms whose $i$th delay is at most $t(\kappa(x))\cdot p(|x|+i)$. Similarly, one gets the notion of $\outputFPT$ algorithms which is defined by a runtime of at most $t(\kappa(x))\cdot p(|x|+|\Sol(x)|)$.

\begin{definition}
The class $\totalFPT$ (resp., $\delayFPT$)
is the class of all parameterized enumeration problems that admit a $\totalFPT$ (resp., $\delayFPT$)
enumeration algorithm.  
\end{definition}

Observe that Fernau's notion of fixed parameter enumerable \cite{fernau02} is equivalent to our term of $\totalFPT$.
Obviously the existence of a $\totalFPT$ enumeration algorithm requires that for every instance $x$ the number of solution is bounded by $f(\kappa(x))\cdot p(|x|)$, which is quite restrictive. 
Nevertheless, Fernau was able to show that the problem $\minimumVC$ (where we are only interested in vertex covers of minimum cardinality) is in $\totalFPT$, but by the just given cardinality constraint, $\allVC$ is not in $\totalFPT$. In the upcoming section we will prove that $\allVC$ is in $\delayFPT$; hence we conclude:

\begin{corollary}
$\totalFPT\subsetneq \delayFPT$.
\end{corollary}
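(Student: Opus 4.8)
The plan is to establish the strict inclusion $\totalFPT \subsetneq \delayFPT$ in two parts.

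First, for the inclusion $\totalFPT \subseteq \delayFPT$, I would argue that any $\totalFPT$ enumeration algorithm is automatically a $\delayFPT$ algorithm. Suppose $\mathcal{A}$ runs in total time $t(\kappa(x)) \cdot p(|x|)$. Then every single delay (including precalculation and postcalculation time) is bounded by the total running time, hence by $t(\kappa(x)) \cdot p(|x|)$, which is exactly the bound required for $\delayFPT$. So the inclusion is immediate and essentially a triviality.

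Second, for the strictness, I would exhibit a concrete witness problem that lies in $\delayFPT$ but not in $\totalFPT$. The natural candidate, already flagged in the excerpt, is $\allVC$. The key observation is the separating argument based on the number of solutions: membership in $\totalFPT$ forces the solution set to have size bounded by $f(\kappa(x)) \cdot p(|x|)$, since all solutions must be written in the allotted total time. For $\allVC$, however, one can construct instances whose number of vertex covers of size $\le k$ grows like $n^{\Theta(k)}$ (for example, take a graph on $n$ isolated vertices, or more carefully a graph where many small covers coexist), so the count is not bounded by any $f(k) \cdot p(n)$; this rules out $\totalFPT$ membership. On the other hand, the excerpt states that the upcoming section proves $\allVC \in \delayFPT$, which I am entitled to assume as a forward reference. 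Combining the two directions yields the strict separation.

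The main obstacle is the cardinality lower bound: one must be careful to choose a graph family for which the number of vertex covers of size at most $k$ genuinely exceeds any function of the form $f(k)\cdot p(n)$, so that no total-time bound of the required shape can hold. Concretely I would pick a fixed small $k$ and let $n$ grow, forcing the solution count to be a polynomial in $n$ of degree depending on $k$ but with the count still exceeding any fixed polynomial once $k$ is large enough relative to the degree of $p$; making this quantifier interplay precise is the only delicate point, and it is exactly the argument the excerpt alludes to when it says the cardinality constraint prevents $\allVC$ from being in $\totalFPT$.
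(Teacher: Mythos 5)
Your proof is correct and follows essentially the same route as the paper: the inclusion is the trivial observation that a total-time bound is also a delay bound, and the separation uses $\allVC$ as witness, combining the cardinality obstruction to $\totalFPT$ membership (made concrete via edgeless graphs with $\Theta(n^k)$ covers) with the forward reference to $\allVC\in\delayFPT$. The paper's argument is identical, only stated more tersely in the prose preceding the corollary.
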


We consider that $\delayFPT$ should be regarded as the good notion of tractability for parameterized enumeration complexity.

%
%
%
%
\section{Enumeration by Kernelization}

Kernelization is one of the most successful techniques in order to design para\-meter-efficient algorithms, and actually characterizes parameter-tractable problems. Remember that kernelization consists in a pre-processing, which is a polynomial time many-one reduction of a problem to itself with the additional property that the (size of the) image is bounded in terms of the parameter of the argument (see e.g., \cite{FlumGrohe06}).

In the following we propose a definition of an \textit{enum-kernelization}, which should be seen as a pre-processing step suitable for an efficient enumeration. 

\begin{definition}\label{def:enum-kernel}
Let $(Q,\kappa, \Sol)$ be a parameterized enumeration problem over $\Sigma$. A polynomial time computable function $K\colon\Sigma^*\rightarrow\Sigma^*$ is an \emph{enum-kernelization} of $(Q,\kappa, \Sol)$ if there exist:
\begin{enumerate}
 \item a computable function $h\colon \N\rightarrow \N$ such that for all
$x\in\Sigma^*$ we have

 \centerline{ $(x\in Q\Leftrightarrow K(x)\in Q)\hbox{ and } |K(x)|\le
h(\kappa(x))$, }
\item a computable function $f\colon{\Sigma^*}^2\rightarrow\calP(\Sigma^*)$, which from a pair $(x, w)$ where $x\in Q$ and $w\in \Sol(K(x))$, computes a subset of $\Sol(x)$, such that 
\begin{enumerate}
 \item for all $w_1, w_2\in \Sol(K(x))$, $ w_1\ne w_2\Rightarrow f(x,w_1)\cap f(x, w_2)=\emptyset$,
 \item $\displaystyle \bigcup_{w\in\Sol(K(x))}f(x, w)=\Sol(x)$
 \item there exists an enumeration algorithm $\calA_f$, which on input $(x,w)$, where $x\in Q$ and $w\in \Sol(K(x))$, enumerates all solutions of $f(x,w)$ with delay $p(|x|)\cdot t(\kappa(x))$, where $p$ is a polynomial and $t$ is a computable function.
\end{enumerate}
\end{enumerate}
If $K$ is an enum-kernelization of $(Q,\kappa, \Sol)$, then for every instance $x$ of $Q$ the image $K(x)$ is called an \emph{enum-kernel} of $x$ (under $K$).
\end{definition}

An enum-kernelization is a reduction $K$ from a parameterized enumeration problem to itself. As in the decision setting it has the property that the image is bounded in terms of the parameter argument.
For a problem instance $x$, $K(x)$ is the kernel of $x$. Observe that if $K$ is an enum-kernelization of the enumeration problem $(Q,\kappa, \Sol)$, then it is also a kernelization for the associated decision problem.
In order to fit for enumeration problems, enum-kernelizations have the additional property that the set of solutions of the original instance $x$ can be rebuilt from the set of solutions of the image $K(x)$ with $\delayFPT$. This can be seen as a generalization of the notion of \emph{full kernel} from \cite{damaschke06}, appearing in the context of what is called  \emph{subset minimization problems}. A full kernel is a kernel that contains all minimal solutions, since they represent in a certain way all solutions. 
In the context of backdoor sets (see the next section), what is known as a loss-free kernel \cite{sasz08} is a similar notion.
In our definition, an enum-kernel is a kernel that represents all solutions in the sense that they can be obtained with $\FPT$ delay from the solutions for the kernel.
\medskip

Vertex cover is a very famous problem whose parameterized complexity has been extensively studied. It is a standard example when it comes to kernelization. Let us examine it in the light of the notion of enum-kernelization.

\begin{proposition}\label{prop:allvc_enumkernel}	 
$\allVC$ has an enum-kernelization.
\end{proposition}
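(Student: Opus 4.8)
The plan is to build the enum-kernelization on top of the classical degree-based (Buss) kernel for vertex cover, and then to argue that the vertices discarded during kernelization can be reinjected cheaply by the reconstruction function~$f$. Given an instance $x=(G,k)$, I would exhaustively apply two reduction rules: (R1) if a vertex $v$ has degree larger than the current budget, then $v$ lies in every vertex cover of size at most $k$, so I place $v$ into a forced set $H$, delete $v$ together with its incident edges, and decrement the budget; (R2) delete every isolated vertex, collecting these into a set $I$. Let $G'$ be the surviving graph, $k'=k-|H|$ the remaining budget, and set $K(x)=(G',k')$. Since after (R1) every surviving vertex has degree at most $k'$, a cover of size $k'$ meets at most $(k')^2$ edges; so if $G'$ has more than $(k')^2$ edges, then $x$ has no vertex cover of size at most $k$ and I instead output a fixed trivial no-instance. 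Standard kernelization correctness gives $x\in Q \Leftrightarrow K(x)\in Q$ and $|K(x)|\le h(\kappa(x))$ for a quadratic $h$, so $K$ is in particular a kernelization of the decision problem.

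The crucial point for enumeration is that (R2), harmless for the decision problem, is responsible for (possibly exponentially) many solutions: each vertex of $I$ is incident in $G$ only to vertices already forced into $H$, and may therefore be freely added to or omitted from a cover, subject only to the budget. Since $|I|$ may far exceed any function of the parameter, these vertices cannot be stored inside the size-bounded kernel; instead I exploit the fact that the reconstruction function receives the original instance $x$ as an argument. On input $(x,w)$ with $w\in\Sol(K(x))$, the function $f$ recomputes $H$ and $I$ from $x$ in polynomial time and returns
\[
 f(x,w)=\bigl\{\,H\cup w\cup S \;:\; S\subseteq I,\ |H|+|w|+|S|\le k\,\bigr\},
\]
so that $H$ is forced, $w$ covers the core $G'$, and $S$ is an arbitrary budgeted choice of optional isolated vertices.

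It then remains to verify the three conditions of Definition~\ref{def:enum-kernel}. For disjointness, note that $V(G')$ is disjoint from both $H$ and $I$, whence every $C\in f(x,w)$ satisfies $C\cap V(G')=w$; thus $w_1\ne w_2$ forces $f(x,w_1)\cap f(x,w_2)=\emptyset$, and the same observation shows that distinct pairs $(w,S)$ yield distinct covers, so no duplicates arise. For the covering property I would prove that a set $C$ is a vertex cover of $G$ of size at most $k$ if and only if $H\subseteq C$, the set $w:=C\cap V(G')$ is a vertex cover of $G'$ of size at most $k'$, and $C\setminus(H\cup w)\subseteq I$; here one uses that the only edges of $G$ not incident to $H$ are precisely the edges of $G'$, and that the vertices of $I$ have their neighbours entirely inside $H$. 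This decomposition places each $C$ in exactly one $f(x,w)$ and yields $\bigcup_{w}f(x,w)=\Sol(x)$. Finally, enumerating $f(x,w)$ reduces to listing all subsets $S\subseteq I$ of cardinality at most $k-|H|-|w|$ and prepending $H\cup w$; bounded-cardinality subsets of a ground set can be enumerated in, say, lexicographic order with polynomial delay, so $\calA_f$ achieves delay $p(|x|)$, well within the required bound $p(|x|)\cdot t(\kappa(x))$.

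The main obstacle is conceptual rather than computational: the naive decision kernel destroys exactly the information — the isolated vertices — that generates the bulk of the solutions, and it cannot simply retain that information without violating the size bound $h(\kappa(x))$. The resolution is the realization that an enum-kernel need not itself carry this information, as long as the reconstruction step is permitted to read the original instance $x$. Once this is recognized, the forcing argument for (R1) and the optionality of the vertices in $I$ are routine, and the fpt-delay enumeration of $f(x,w)$ reduces to the standard task of listing bounded-size subsets.
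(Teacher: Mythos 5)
Your proof is correct and follows essentially the same route as the paper: both use Buss' kernelization, observe that the high-degree vertices ($H$, the paper's $V_D$) are forced into every small cover while the isolated vertices ($I$, the paper's $V_I$) are optional, define the same reconstruction function $f(x,w)=\{H\cup w\cup S : S\subseteq I,\ |H|+|w|+|S|\le k\}$, and reduce the enumeration of each $f(x,w)$ to polynomial-delay listing of bounded-size subsets of $I$. The only detail the paper makes explicit that you leave implicit is that the kernel (and hence the sets $H$ and $I$ recomputed inside $f$) is independent of the order in which the reduction rules are applied, which is needed for $f$ to be well defined.
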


\begin{proof} Given a graph $G=(V,E)$ and a positive integer $k$, we are interested in enumerating all vertex covers of $G$ of size at most $k$. We prove that the famous Buss' kernelization \cite[pp.~208ff]{FlumGrohe06} provides an enum-kernelization. 
Let us remember that Buss' algorithm consists in applying repeatedly the following rules until no more reduction can be made:
\begin{enumerate}
 \item If $v$ is a vertex of degree greater than $k$, remove $v$ from the graph and decrease $k$ by one.
 \item If $v$ is an isolated vertex, remove it.
\end{enumerate}
 The algorithm terminates and the kernel $K(G)$ is the reduced graph $(V_K, E_K)$ so obtained if it has less than $k^2$ edges, and the complete graph ${\cal K}_{k+1}$ otherwise.

One verifies that whenever in a certain step of the removing process rule (1) is applicable to a vertex $v$, and $v$ is not removed immediately, then rule (1) remains applicable to $v$ also in any further step, until it is removed. Therefore, whenever we have a choice during the removal process, our choice does not influence the finally obtained graph: the kernel is unique.

Suppose that $K(G)=(V_K, E_K)$. Let $V_D$ be the set of vertices (of large degree) that are removed by the rule (1) and $V_I$ the set of vertices (isolated) that are removed by the rule (2). On the one hand every vertex cover of size $\le k$ of $G$ has to contain $V_D$. On the other hand, no vertex from $V_I$ is part of a minimal vertex cover.
Thus, all vertex covers of $G$ are obtained in considering all the vertex covers of $K(G)$, completing them by $V_D$ and by some vertices of $V_I$ up to the cardinality $k$.
Therefore, given $W$ a vertex cover of $K(G)$, then we define $f(G, W)=\{ W\cup V_D\cup V'\mid V'\subseteq V_I, |V'| \le k-|W|-|V_D|\}$. It is then clear that for $W_1\ne W_2$, $W_1,W_2 \in \Sol(K(G))$, we have that $f(G,W_1)\cap f(G, W_2)=\emptyset$. From the discussion above we have that $\bigcup_{W\in\Sol(K(G))}f(G, W)$ is the set of all $\le k$-vertex covers of $G$. Finally, given $W$ a vertex cover of $K(G)$, after a polynomial time pre-processing of $G$ by Buss's kernelization in order to compute $V_D$ and $V_I$, the enumeration of $f(G,W)$ comes down to an enumeration of all subsets of $V_I$ of size at most $k-|W|-|V_D|$. Such an enumeration can be done with polynomial delay by standard algorithms. Therefore, the set $f(G,W)$ can be enumerated with polynomial delay and, \textit{a fortiori}, with $\delayFPT$. 
\qed
\end{proof}

As in the context of decision problems, enum-kernelization actually characterizes the class of enumeration problems having $\delayFPT$-algorithm, as shown in the following theorem.

\begin{theorem} \label{thm:enum_kernel_characterization}
For every parameterized enumeration problem $(Q,\kappa, \Sol)$ over $\Sigma$, the following are equivalent:
\begin{enumerate}
 \item $(Q,\kappa, \Sol)$ is in $\delayFPT$
 \item For all $x\in \Sigma^*$ the set $\Sol(x)$ is computable and $(Q,\kappa,\Sol)$ has an enum-kernelization.
\end{enumerate}
\end{theorem}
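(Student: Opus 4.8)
The plan is to prove the two implications separately. The direction $(2)\Rightarrow(1)$ is the routine one, while $(1)\Rightarrow(2)$ carries the real content and closely mirrors the classical fact that a decidable parameterized problem lies in $\FPT$ if and only if it admits a kernelization; the main adaptation is that here the reconstruction step is allowed to run with fpt-delay.

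For $(2)\Rightarrow(1)$, I would assume an enum-kernelization $K$ together with the functions $h,f$ and the algorithm $\calA_f$ of Definition~\ref{def:enum-kernel}, and $\Sol$ computable. On input $x$ the algorithm first computes $K(x)$ in polynomial time. Since $|K(x)|\le h(\kappa(x))$ and $\Sol$ is computable, the entire set $\Sol(K(x))$ can be computed in time bounded by a computable function of $\kappa(x)$ alone, simply by running the machine for $\Sol$ on the small instance $K(x)$; in particular $|\Sol(K(x))|$ is bounded by a function of $\kappa(x)$. I would then iterate over all $w\in\Sol(K(x))$ and run $\calA_f(x,w)$, which enumerates $f(x,w)$ with delay $p(|x|)\cdot t(\kappa(x))$. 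Property~(2a) guarantees that no solution appears twice and (2b) that exactly $\Sol(x)$ is produced. For the delay, the precalculation (compute $K(x)$, compute $\Sol(K(x))$, reach the first output) is fpt; every later delay is either a delay of some $\calA_f(x,w)$, or a transition between two consecutive $w$, during which we may need to skip the (at most fpt-many) $w$ with $f(x,w)=\emptyset$, each skip costing one run of $\calA_f$. All of this stays within $t'(\kappa(x))\cdot p'(|x|)$ for suitable $t',p'$, so the problem is in $\delayFPT$.

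For $(1)\Rightarrow(2)$, let $\mathcal{A}$ be a $\delayFPT$ algorithm with delay at most $t(\kappa(x))\cdot p(|x|)$, with $t$ computable and, w.l.o.g., nondecreasing. First, $\Sol(x)$ is computable: run $\mathcal{A}$ to completion. To build the enum-kernelization, fix (the degenerate cases $Q\in\{\emptyset,\Sigma^*\}$ being trivial) a yes-instance $y_0\in Q$, a no-instance $n_0\notin Q$, and designate one solution $w_0\in\Sol(y_0)$. On input $x$ with $k=\kappa(x)$, simulate $\mathcal{A}$ for $q(|x|)$ steps, where $q$ is a fixed polynomial with $q(n)\ge n\cdot p(n)$. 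Three cases arise: (a) $\mathcal{A}$ produces an output, whence $x\in Q$ and I set $K(x)=y_0$; (b) $\mathcal{A}$ halts without output, whence $x\notin Q$ and I set $K(x)=n_0$; (c) neither happens, in which case the absence of any event within $q(|x|)\ge |x|\cdot p(|x|)$ steps, together with the delay bound $t(k)\cdot p(|x|)$, forces $|x|<t(k)$, so I set $K(x)=x$. This is a polynomial-time computation, $K$ preserves membership in $Q$ in all three cases, and $|K(x)|\le h(k):=\max\{|y_0|,|n_0|,t(k)\}$. For the reconstruction, given a pair $(x,w)$ with $x\in Q$ I recover the case by re-running the same bounded simulation. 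In case (a) I put $f(x,w_0)=\Sol(x)$ for the designated $w_0$ and $f(x,w)=\emptyset$ for every other $w\in\Sol(y_0)$, and I let $\calA_f$ run $\mathcal{A}$ on $x$ when $w=w_0$ (fpt-delay by assumption) and halt immediately otherwise. In case (c), where $K(x)=x$ and hence $\Sol(K(x))=\Sol(x)$, I put $f(x,w)=\{w\}$ and let $\calA_f$ output $w$. In both cases the sets $f(x,w)$ are pairwise disjoint and cover $\Sol(x)$, so $K$ is an enum-kernelization.

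The step I expect to be the crux is the case distinction in $(1)\Rightarrow(2)$: since we cannot decide membership in $Q$ in polynomial time, we must extract membership information from a bounded simulation of $\mathcal{A}$, and where the simulation is inconclusive we have to argue that the instance is necessarily small enough to serve as its own kernel. The second point I would emphasize is that the definition of enum-kernelization permits $\calA_f$ to have fpt-delay rather than only polynomial delay; this is precisely what allows case (a) to offload the whole enumeration onto the original algorithm $\mathcal{A}$ through a trivial kernel, and it is the reason the characterization goes through.
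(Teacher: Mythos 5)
Your proof is correct and follows essentially the same route as the paper's: the $(2)\Rightarrow(1)$ direction composes the polynomial-time kernel computation, the parameter-bounded computation of $\Sol(K(x))$, and successive runs of $\calA_f$, while the $(1)\Rightarrow(2)$ direction uses the same bounded-simulation trick with the three cases (early output yields a fixed yes-kernel carrying all of $\Sol(x)$ on one designated solution, early halt yields a fixed no-kernel, and a timeout forces $|x|$ to be bounded by $t(\kappa(x))$ so that $x$ is its own kernel with $f(x,w)=\{w\}$). Your treatment of skipping the $w$ with $f(x,w)=\emptyset$ and of recovering the case by re-running the simulation is slightly more explicit than the paper's, but these are presentation details, not a different argument.
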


\begin{proof}
 \begin{description}
\item[$(2)\Rightarrow (1)$:]  Let $K$ be an enum-kernelization of $(Q,\kappa, \Sol)$. Given an instance $x\in\Sigma^*$ the following algorithm enumerates all solution in $\Sol(x)$ with $\delayFPT$: compute $K(x)$ in polynomial time, say $p'(|x|)$.
Compute $\Sol(K(x))$, this requires a time $g(\kappa(x))$ for some function $g$ since the size of $K(x)$ is bounded in terms of the parameter argument. Apply successively the enumeration algorithm $\calA_f$ to the input $(x,w)$ for each $w\in\Sol(K(x))$. Since $\calA_f$ requires a delay $p(|x|)\cdot t(\kappa(x))$, the delay of this enumeration algorithm is bounded from above by $(p'(|x|)+p(|x|))\cdot(g(\kappa(x))+t(\kappa(x)))$. The correctness of the algorithm follows from the definition of an enum-kernelization (Item 2.(a) ensures that there is no repetition, Item 2.(b) that all solutions are output). 
%
%
\item[$(1)\Rightarrow (2)$:] Let $\calA$ be an enumeration algorithm for $(Q,\kappa, \Sol)$ that requires delay $p(n)\cdot t(k)$ where $p$ is a polynomial and $t$ some computable function. 
Without loss of generality we assume that $p(n)\ge n$ for all positive integer $n$. If $Q=\emptyset$ or $Q=\Sigma^*$ then  $(Q,\kappa, \Sol)$ has a trivial kernelization that maps every $x\in\Sigma^*$ to the empty string $\epsilon$. 
If $Q=\emptyset$ we are done. If $Q=\Sigma^*$, then fix $w_\epsilon\in\Sol(\epsilon)$ and set for all $x$, $f(x,w_\epsilon)=\Sol(x)$ and $f(x,w)=\emptyset$ for $w\in\Sol(\epsilon)\setminus\{w_\epsilon\}$.
Otherwise, we fix $x_0\in\Sigma^*\setminus Q$, and $x_1\in Q$ with $w_1\in\Sol(x_1)$. 

\medskip

The following algorithm $\calA'$ computes an enum-kernelization for $(Q,\kappa, \Sol)$: Given $x\in\Sigma^*$ with $n:=|x|$ and $k=\kappa(x)$, 
\begin{enumerate}
\item  the algorithm simulates $p(n)\cdot p(n)$ steps of $\calA$. 
\item  If it stops with the answer ``no solution'', then set $K(x)=x_0$ (since $x_0\notin Q$, the function $f$ does not need to be defined). 
\item  If a solution is output within this time, then set $K(x)=x_1$, $f(x, w_1)=\Sol(x)$ and $f(x,w)=\emptyset$ for all $w\in\Sol(x_1)\setminus \{w_1\}$. 
\item  If it does not output a solution within this time, then it holds $n\leq p(n)\leq t(k)$ and then we set $K(x)=x$, and $f(x,w)=\{w\}$ for all $w\in\Sol(x)$. 
\end{enumerate}

Clearly $K(x)$ can thus be computed in time $p(n)^2$, $|K(x)|\le |x_0|+|x_1|+t(k)$, $(x\in Q\Leftrightarrow K(x)\in Q)$, and the function $f$ we have obtained satisfies all the requirements of \Cref{def:enum-kernel}, in particular the enumeration algorithm $\calA$ can be used to enumerate $f(x, w)$ when applicable. Therefore $K$ provides indeed an enum-kernelization for $(Q,\kappa, \Sol)$.\qed
\end{description}
\end{proof}

\begin{corollary}
$\allVC$ is in $\delayFPT$.
\end{corollary}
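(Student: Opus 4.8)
The plan is to obtain the corollary as an immediate application of the machinery developed above, specifically by combining \Cref{prop:allvc_enumkernel} with the characterization in \Cref{thm:enum_kernel_characterization}. That theorem tells us that a parameterized enumeration problem lies in $\delayFPT$ exactly when (i) its solution sets are computable and (ii) it admits an enum-kernelization. My strategy is therefore to check condition (i) by hand and to import condition (ii) directly from the proposition, after which the direction $(2)\Rightarrow(1)$ of the theorem finishes the argument.

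First I would dispose of the computability condition. An instance of $\allVC$ is a pair $(G,k)$, and $\Sol(G,k)$ is the finite set of all vertex covers of $G$ of size at most $k$. This set is trivially computable: one enumerates all subsets $S\subseteq V$ with $|S|\le k$ and, for each, tests whether every edge of $G$ has an endpoint in $S$. Hence $\Sol$ is computable, establishing item (i) of condition~2 in \Cref{thm:enum_kernel_characterization}.

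Next I would invoke \Cref{prop:allvc_enumkernel}, which already shows that $\allVC$ possesses an enum-kernelization (Buss' kernelization together with the reconstruction function $f$ that rebuilds all $\le k$-vertex covers of $G$ from the vertex covers of the kernel). This supplies item (ii). With both parts of condition~2 verified, the implication $(2)\Rightarrow(1)$ of \Cref{thm:enum_kernel_characterization} yields that $\allVC\in\delayFPT$, as claimed.

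I do not expect a genuine obstacle here, since all the substantive work has been carried out in the proposition and the characterization theorem; the only fresh verification is the computability of $\Sol$, which is routine. The corollary is best understood as a showcase of the enum-kernelization framework: once a problem is shown to enum-kernelize, membership in $\delayFPT$ follows for free, mirroring the way kernelization yields $\FPT$ membership in the decision setting.
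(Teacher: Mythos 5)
Your proof is correct and is exactly the argument the paper intends: the corollary is placed immediately after \Cref{thm:enum_kernel_characterization} precisely so that it follows by combining \Cref{prop:allvc_enumkernel} with the implication $(2)\Rightarrow(1)$, and your extra verification that $\Sol(G,k)$ is computable (brute-force search over subsets of size at most $k$) is the only routine detail the paper leaves implicit.
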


\begin{remark}
Observe that in the proof of \Cref{prop:allvc_enumkernel}, the enumeration of the sets of solutions obtained from a solution $W$ of $K(G)$ is enumerable even with polynomial-delay, we do not need fpt delay. We will show in the full paper that this is a general property: Enum-kernelization can be equivalently defined as \FPT-preprocessing followed by enumeration with polynomial delay.
\end{remark}

\section{Enumeration by Self-Reducibility}

In this section we would like to exemplify the use of the algorithmic paradigm of self-reducibility (\cite{Schnorr76,KhullerV91,Schmidt09}), on which various enumeration algorithms are based in the literature. The self-reducibility property of a problem allows a ``search-reduces-to-decision'' algorithm to enumerate the solutions. This technique seems quite appropriate for satisfiability related problems. We will first investigate the enumeration of models of a formula having weight at least $k$, and then turn to strong \HORN-backdoor sets of size $k$. In the first example the  underlying decision problem can be solved in using kernelization (see \cite{KratschMW10}), while in the second it is solved in using the bounded-search-tree technique.

\subsection{Enumeration classification for $\maxonesSAT$}\label{subsec:classification}

The self-reducibility technique was in particular applied in order to enumerate all satisfying assignments of a generalized CNF-formula \cite{CreignouH97}, thus allowing to identify classes of formulas which admit efficient enumeration algorithms. In the context of parameterized complexity a natural problem is $\maxonesSAT$, in which the question is to decide whether there exists a satisfying assignment of weight at least $k$, the integer $k$ being the parameter. 
We are here interested in the corresponding enumeration problem, and we will study it for generalized CNF formulas, namely in Schaefer's framework. In order to state the problem we are interested in more formally, we need some notation.

A \emph{logical relation} of arity $k$ is a relation $R\subseteq\{0,1\}^k$. By abuse of notation we do not make a difference between a relation and its predicate symbol.
A \emph{constraint}, $C$, is a formula  $C=R(x_1,\dots,x_k)$, where $R$ is a logical relation of arity $k$ and the $x_i$'s are (not necessarily distinct) variables. If $u$ and $v$ are two variables, then $C[u/v]$ denotes the constraint obtained from $C$ in replacing each occurrence of $v$ by $u$. 
An assignment $m$ of truth values to the variables \emph{satisfies} the constraint $C$ if $\bigl(m(x_1),\dots,m(x_k)\bigr)\in R$. A \emph{constraint language} $\Gamma$ is a finite set of logical relations. A \emph{$\Gamma$-formula} $\phi$, is a conjunction of constraints using only logical relations from $\Gamma$ and is hence a quantifier-free first order formula. With $\var(\phi)$ we denote the set of variables appearing in $\phi$. A $\Gamma$-formula $\phi$ is satisfied by an assignment $m:\var(\phi)\to\{0,1\}$ if $m$ satisfies all constraints in $\phi$ simultaneously (such a satisfying assignment is also called a \emph{model} of $\phi$). The \textit{weight of a model} is given by the number of variables set to true. 
Assuming a canonical order on the variables we can regard models as tuples in the obvious way and we do not distinguish between a formula $\phi$ and the logical relation $R_\phi$ it defines, i.e., the relation consisting of all models of $\phi$. In the following we will consider two particular constraints, namely $\imp(x,y)=(x\rightarrow y)$ and $\true(x)=(x)$.

We are interested in the following parameterized enumeration problem.

\enumproblem%
{$\enummaxonesSAT(\Gamma)$}%
{A $\Gamma$-formula $\varphi$ and a positive integer $k$}%
{$k$}%
{All assignments satisfying $\varphi$ of weight $\ge k$}%

The corresponding decision problem, denoted by $\maxonesSAT(\Gamma)$, i.e., the problem to decide if a given formula has a satisfying assignment of a given weight, has been studied by Kratsch et al.~\cite{KratschMW10}. They completely settle the question of its parameterized complexity in Schaefer's framework. To state their result we need some terminology concerning types of Boolean relations.
%
%

Well known already from Schaefer's original paper \cite{Schaefer78} are the following seven classes:
We say that a Boolean relation~$R$ is \emph{$a$-valid} (for $a\in\{0,1\}$) if $R(a,\ldots, a)=1$. 
A relation~$R$ is \emph{Horn} (resp., \emph{dual Horn}) if $R$ can be defined by a CNF formula which is Horn (resp., dual Horn), i.e., every clause contains at most one positive (resp., negative) literal. 
A relation~$R$ is \emph{bijunctive} if $R$ can be defined by a 2-CNF formula.
A relation $R$ is \emph{affine} if it can be defined by an \emph{affine} formula, i.e., conjunctions of XOR-clauses (consisting of an XOR of some variables plus maybe the constant 1)---such a formula may also be seen as a system of linear equations over GF$[2]$. 
A relation $R$  is \emph{complementive} if for all $m\in R$ we have also $\vec{1} \oplus m\in R$. 

Kratsch et al. \cite{KratschMW10} introduce a new restriction of the class of bijunctive relations as follows.
For this they use the notion of \textit{frozen implementation}, stemming from \cite{Nordhz09}. Let $\varphi$ be a formula and $x\in\var(\varphi)$, then $x$ is said to be \textit{frozen} in $\varphi$ if it is assigned the same truth value in all its models. Further, we say that $\Gamma$ \textit{freezingly implements} a given relation $R$ if there is a $\Gamma$-formula $\varphi$ such that $R(x_1,\ldots x_n)\equiv\exists X\varphi$, where $\varphi$ uses variables from $X\cup \{x_1,\ldots x_n\}$ only, and all variables in $X$ are frozen in $\varphi$. For sake of readability, we denote by $\clos{\Gamma}$ the set of all relations that can be freezingly implemented by $\Gamma$. 
A relation~$R$ is \emph{\stronglybijunctive} if it is in $\clos{\{(x\lor y), (x\ne y), (x\rightarrow y)\}}$.

Finally, we say that a constraint language $\Gamma$ has one of the just defined properties 
if every relation in $\Gamma$ has the property.

\begin{theorem}\label{thm:classification_maxones_decision} \cite[Thm.~7]{KratschMW10}
If $\Gamma$ is 1-valid, dual-Horn, affine, or \stronglybijunctive, then $\maxonesSAT(\Gamma)$ is in  $\FPT$.  Otherwise $\maxonesSAT(\Gamma)$ is $\Wone$-hard.
\end{theorem}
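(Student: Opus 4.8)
The plan is to prove the upper-bound part and the hardness part separately, in both cases reducing the analysis of $\Gamma$ to that of its frozen closure $\clos{\Gamma}$. The central tool is a Galois-type lemma: whenever $\Gamma$ freezingly implements a relation $R$, the problem $\maxonesSAT(\Gamma\cup\{R\})$ fpt-reduces to $\maxonesSAT(\Gamma)$. The reason for insisting on \emph{frozen} implementations rather than arbitrary existential quantification is that a frozen auxiliary variable receives the same truth value in every model, so replacing a constraint $R(\dots)$ by its $\Gamma$-gadget changes the weight of every assignment only by an additive constant depending on $\Gamma$; an ordinary $\exists$-quantifier could alter weights uncontrollably and hence would not be parameter-preserving. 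Granting this lemma, the complexity of $\maxonesSAT(\Gamma)$ depends only on $\clos{\Gamma}$, and---after checking that each of the four stated properties is preserved under frozen implementation---it suffices to inspect the finitely many maximal closures enjoying one of the properties and the minimal closures violating all of them, as dictated by Post's lattice.

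For the tractable side I would give a separate fixed-parameter algorithm for each of the four cases. If $\Gamma$ is $1$-valid the all-ones assignment is a model of maximum weight, so $\varphi$ is a yes-instance exactly when $k\le|\var(\varphi)|$. If $\Gamma$ is dual Horn the models are closed under coordinatewise disjunction, hence admit a unique maximal model that can be computed greedily in polynomial time and already realises the largest weight. The affine case is transformed by Gaussian elimination into the question whether an affine subspace over $\mathrm{GF}[2]$ contains a vector of Hamming weight at least $k$, which is settled by linear-algebraic, parameter-bounded techniques. The genuinely new case is the \stronglybijunctive{} one: here one designs a kernelization, using the frozen-implementation description of $\clos{\{(x\lor y),(x\ne y),(x\rightarrow y)\}}$ to shrink the instance to size bounded in $k$. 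This last kernelization is precisely the ingredient that the next section reuses.

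For the hardness part I would argue that any $\Gamma$ failing all four properties freezingly implements, by the structure of Post's lattice, a relation rich enough to encode a canonical $\Wone$-hard problem such as $k$-\textsc{Independent Set} or weighted $q$-CNF satisfiability. Walking through the finitely many residual closures, for each one I would build a weight-preserving gadget that turns an instance asking for a size-$k$ structure into a $\Gamma$-formula whose weight-$\ge k'$ models correspond to such structures, with $k'$ bounded by a function of $k$ so that the reduction is an fpt-reduction. The main obstacle is twofold. First, the frozen Galois connection must be established cleanly, i.e.\ one has to verify that the gadget substitution is genuinely an fpt-reduction with controlled parameter blow-up and that each of the four properties is closed under $\clos{\cdot}$. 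Second, and most importantly, the kernelization for the \stronglybijunctive{} class is the technical heart of the whole classification and the only place where the freshly introduced relational class does real work; getting the kernel bound right is where I would expect the bulk of the effort to lie. Once both pieces are in place, the dichotomy follows by matching every closure against the four properties and invoking either the corresponding algorithm or the corresponding hardness gadget.
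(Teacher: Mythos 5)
First, note that the paper itself contains \emph{no} proof of this statement: it is imported as a black box from Kratsch, Marx and Wahlstr\"om \cite[Thm.~7]{KratschMW10}. Your proposal can therefore only be measured against that external proof and against the paper's proof of its enumeration analogue, \Cref{thm:classification_maxones}, which reuses the same machinery. Measured so, your outer skeleton is the right one---separate $\FPT$ algorithms for the four tractable classes, plus $\Wone$-hardness via weight-controlled frozen implementations---and your treatment of the 1-valid case (yes iff $k\le|\var(\varphi)|$) and of the dual-Horn case (closure under coordinatewise disjunction yields a maximum model, computable in polynomial time, which realizes the maximum weight) is correct.

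However, two steps are genuinely broken or missing. First, reducing the case analysis to ``the finitely many maximal/minimal closures dictated by Post's lattice'' is unsound: Post's lattice classifies co-clones, i.e., closure under implementations with \emph{unrestricted} existential quantification, which is exactly the operation that freezing forbids, as you yourself observe. The closures under frozen implementation do not form Post's lattice, and no finite list of them is available to ``walk through''. This is why the actual proofs (in \cite{KratschMW10}, and likewise in this paper's proof of \Cref{thm:classification_maxones}) work by hand: choose tuples witnessing that some $R\in\Gamma$ is not dual-Horn (resp.\ not affine), identify variables along these tuples to obtain a constant-arity relation $Q\in\clos{\Gamma}$, and finish with a finite case distinction over the possible $Q$'s, showing in each case either $\imp\in\clos{Q}$ or direct hardness. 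Second, the \stronglybijunctive{} case, which you correctly single out as the technical heart, is not proved at all: ``one designs a kernelization'' is the statement to be established, not an argument, so one of the four tractable cases is simply absent. (The affine case is also only gestured at, though there a routine argument works: each coordinate of the solution affine subspace over GF$[2]$ is either frozen or equals $1$ in exactly half the solutions, so either the average weight already reaches $k$, or fewer than $3k$ coordinates are non-frozen or frozen to $1$ and brute force on the projected subspace applies.) A smaller but real gap: your claim that gadget substitution shifts weights ``only by an additive constant depending on $\Gamma$'' holds only if the frozen auxiliary variables are \emph{shared} among all gadget copies---legitimate precisely because they are frozen to fixed values; with fresh auxiliaries per constraint the shift grows with the number of constraints, and the parameter of the constructed instance would not be bounded by a function of $k$ alone, so the reduction would not be an fpt-reduction. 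This sharing trick is the actual content of the frozen-implementation lemma and needs to be stated.
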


Interestingly we can get a complete classification for enumeration as well. The fixed-parameter efficient enumeration algorithms are obtained through the algorithmic paradigm of self-reducibility. 

We would like to mention that an analogously defined decision problem $\minonesSAT(\Gamma)$
is in $\FPT$ (by a bounded search-tree algorithm) and the enumeration problem has $\FPT$-delay for all constraint langauges $\Gamma$. The decision problem $\exactonesSAT(\Gamma)$ has been studied by Marx \cite{marx05} and shown to be in $\FPT$ iff $\Gamma$ has a property called ``weakly separable''. We remark that it can be shown, again by making use of self-reducibility, that under the same conditions, the corresponding enumeration algorithm has $\FPT$-delay. This will be presented in the full paper. In the present submission we concentrate on the, as we think, more interesting maximization problem, since here, the classification of the complexity of the enumeration problem differs from the one for the decision problem, as we state in the following theorem.

\begin{theorem}\label{thm:classification_maxones}  
If $\Gamma$ is  dual-Horn, affine, or \stronglybijunctive, then there is a $\delayFPT$ algorithm for $\enummaxonesSAT(\Gamma)$. Otherwise such an algorithm does not exist unless $\Wone=\FPT$.
\end{theorem}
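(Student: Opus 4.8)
The plan is to prove the two directions of the dichotomy separately, taking the decision classification of \Cref{thm:classification_maxones_decision} as a black box. For the tractable side ($\Gamma$ dual-Horn, affine, or \stronglybijunctive) I would use self-reducibility driven by the fixed-parameter decision algorithm from \Cref{thm:classification_maxones_decision}. Fix an order $x_1,\dots,x_n$ on $\var(\varphi)$ and run a depth-first traversal of the binary assignment tree: at a node fixing $x_1,\dots,x_i$, branch on $x_{i+1}\in\{0,1\}$, and before descending ask the decision oracle whether the current partial assignment extends to a model of $\varphi$ whose weight reaches the remaining budget. Descend only into branches that survive this test, so that every root-to-leaf path terminates in a genuine solution and no duplicates occur. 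The time between two consecutive outputs is then bounded by the depth $n$ times the oracle cost, i.e.\ by $p(n)\cdot t(k)$, which is exactly $\delayFPT$.

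The point needing care on the tractable side is that the oracle is called on modified instances in which some variables are fixed to constants and the budget is decreased. Fixing a variable to $1$ lowers the budget while fixing to $0$ leaves it unchanged, so the parameter of every query stays at most $k$ and the queries remain fixed-parameter. It then suffices to verify that substituting constants keeps the formula in the same class, so that \Cref{thm:classification_maxones_decision} applies to each query. This is immediate for dual-Horn and for affine formulas, and for \stronglybijunctive relations it follows because $\clos{\{(x\lor y),(x\ne y),(x\to y)\}}$ already freezingly implements both constants (for instance $(x\ne y)\wedge(x\to y)$ has the unique model $x=0,y=1$, freezingly implementing $\true$), hence is closed under fixing coordinates. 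With this closure lemma the self-reduction yields a $\delayFPT$ algorithm in all three tractable cases.

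For the lower bound assume $\Gamma$ is none of dual-Horn, affine, \stronglybijunctive, and split on $1$-validity. If $\Gamma$ is not $1$-valid then $\maxonesSAT(\Gamma)$ is already $\Wone$-hard by \Cref{thm:classification_maxones_decision}; a $\delayFPT$ enumerator then decides it in fpt-time, since its first solution, if any, must appear within an fpt time bound and otherwise the algorithm terminates within an fpt time bound having produced nothing. Hence a $\delayFPT$ algorithm would give $\Wone=\FPT$.

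The genuinely new case is $\Gamma$ $1$-valid but in none of the three classes, where decision is trivial since the all-ones assignment is always a model of weight $n\ge k$. Here I would first observe that a $\delayFPT$ enumerator decides in fpt-time whether a \emph{second} solution exists: the all-ones model is output within precalculation plus one delay, a differing model (if any) within one further delay, and otherwise the algorithm halts within its postcalculation time. It therefore suffices to show that deciding the existence of a model of weight $\ge k$ \emph{other than} all-ones is $\Wone$-hard, and I would reduce from $\maxonesSAT(\Gamma\cup\{c_0\})$ with $c_0=\{(0)\}$; this is $\Wone$-hard by \Cref{thm:classification_maxones_decision}, because $c_0$ destroys $1$-validity while $\Gamma\cup\{c_0\}$ still contains all relations of $\Gamma$ and hence stays outside the three classes. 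The main obstacle is the reduction itself: one must reproduce the effect of the constant $c_0$ inside a purely $1$-valid $\Gamma$-formula so that its non-trivial high-weight models correspond exactly to the models of the $c_0$-instance. This is delicate precisely because a $1$-valid language can only freezingly implement $1$-valid relations, so $c_0$ cannot be implemented directly and its force-to-$0$ behaviour must instead be transferred onto the second-solution mechanism; carrying out this gadget construction, using the structural consequences of $\Gamma$ being $1$-valid yet non-tractable, is where the bulk of the work lies.
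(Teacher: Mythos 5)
Your tractable side and your treatment of the non-1-valid case coincide with the paper's own proof: the same self-reducibility algorithm driven by the decision procedure of \Cref{thm:classification_maxones_decision} (with the same closure-under-substitution observation), and the same argument that a $\delayFPT$ enumerator decides a $\Wone$-hard problem in fpt-time. You also set up the 1-valid case exactly as the paper does, by reducing ``existence of a nontrivial model of weight $\ge k$'' (the paper's problem $\maxonesSAT^*(\Gamma)$) to the second output of the enumerator, and by identifying $\maxonesSAT(\Gamma\cup\{0\})$ as the right $\Wone$-hard starting point. But at that point you stop: you explicitly defer the gadget construction that transfers the force-to-zero behaviour of the constant onto the second-solution mechanism, and that construction is the entire technical content of the hardness proof. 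As it stands, your argument reduces the theorem to an unproved claim, which is a genuine gap.

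The two missing ideas are as follows. First, the paper observes that $\imp(x,y)=(x\rightarrow y)$ is itself 1-valid, and simulates the constant $0$ by combining implications with nontriviality: given a $\Gamma\cup\{0\}$-formula $\varphi$ over $x_1,\dots,x_n$, set $\varphi':=\varphi[f/0]\land\bigwedge_{i=1}^n \imp(f,x_i)$ for a fresh variable $f$. Any model of $\varphi'$ with $f=1$ is forced to be the all-1 assignment, so the \emph{nontrivial} models of $\varphi'$ of weight $\ge k$ correspond exactly to the models of $\varphi$ of weight $\ge k$; this yields $\maxonesSAT(\Gamma\cup\{0\})\redfpt\maxonesSAT^*(\Gamma\cup\{\imp\})$. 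Second, one must eliminate $\imp$: the paper shows that frozen implementations preserve the problem, i.e., $R\in\clos\Gamma$ implies $\maxonesSAT^*(R)\redfpt\maxonesSAT^*(\Gamma)$, and then performs a case analysis, building from witnesses of non-dual-Horn-ness and non-affine-ness a ternary relation $Q\in\clos\Gamma$ such that either $\imp\in\clos Q\subseteq\clos\Gamma$ (six of the eight possible cases; a symmetric variant of implication is used when $\Gamma$ is complementive), or $Q$ is non-Schaefer, so that $\satstar(Q)$ and hence $\maxonesSAT^*(\Gamma)$ restricted to $k=0$ is already $\NP$-hard, giving the collapse $\PTime=\NP$ in those cases. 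Note that your observation that a 1-valid language can only freezingly implement 1-valid relations, while correct (and it does rule out implementing $c_0$ directly), is not an obstacle to this route, precisely because $\imp$ is 1-valid; without these two steps, however, your reduction from $\maxonesSAT(\Gamma\cup\{c_0\})$ cannot be completed.
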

%
%

It would be interesting for those cases of $\Gamma$ that do not admit a $\delayFPT$ algorithm to determine an upper bound besides the trivial exponential time bound to enumerate all solutions. 
In particular, are there such sets $\Gamma$ for which $\enummaxonesSAT(\Gamma)$ is in $\outputFPT$?


\begin{proof} (of \Cref{thm:classification_maxones})
We first propose a canonical algorithm for enumerating all satisfying assignments of weight at least $k$. 
The function {\verb'HasMaxOnes'}$(\phi,k)$ tests if the formula $\phi$ has a model of weight at least $k$.

\begin{algorithm}
	
\KwIn{A formula $\phi$ with $\var(\phi)=\{x_1,\ldots, x_n \}$, an integer $k$}
\KwOut{All sat.\ assignments (given as sets of variables) of $\phi$ of weight $\ge k$.}
	\SetKwFunction{maxones}{HasMaxOnes}
	\SetKwFunction{generate}{Generate}
	\SetKwFunction{algo}{algo}
	\SetKwFunction{proc}{proc}
	\LinesNumbered\setcounter{AlgoLine}{0}
	\nl\lIf{$\maxones(\phi,k)$}{$\generate(\phi, \emptyset, k, n)$}
	
	\BlankLine\BlankLine\BlankLine
	\proce $\generate(\phi, M, w, p):$
	\BlankLine \setcounter{AlgoLine}{0}
	
	\nl\lIf{$w=0$ or $p=0$}{\Return $M$}

	\nl\Else{
	\nl	\If{$\maxones(\phi[x_p=1], w-1)$}{\nl$\generate(\phi[x_p=1], M\cup\{x_p\},w-1,p-1)$}
	\nl	\lIf{$\maxones(\phi[x_p=0], w)$}{$\generate(\phi[x_p=0], M, w,p-1)$}
	}
  \caption{Algorithm with procedure}

\label[algorithm]{algo:generate-sat}
\caption{Enumerate the models of weight at least $k$}
\end{algorithm}
%


Observe that if $\Gamma$ is dual-Horn, affine, or \stronglybijunctive, then according to \Cref{thm:classification_maxones} the procedure {\verb'HasMaxOnes'}$(\phi,k)$ can be performed in $\FPT$. 
%
%
Moreover essentially if $\phi$ is dual-Horn (resp., affine, \stronglybijunctive) then so are $\phi[x_p=0]$ and $\phi[x_p=1]$ for any variable $x_p$. Therefore, in all these cases the proposed enumeration algorithm has clearly $\delayFPT$.
Now it remains to deal with the hard cases. Roughly speaking we will show that in these cases either finding one solution or finding two solutions is hard, thus excluding the existence of an efficient enumeration algorithm. Let us consider the problem $\maxonesSAT^*(\Gamma)$, which given a formula $\phi$ and an integer $k$ consists in deciding whether $\phi$ has a nontrivial (i.e., non-all-1) model of weight at least $k$. We will show that when $\Gamma$ is neither dual-Horn, nor affine, nor \stronglybijunctive, then $\maxonesSAT^*(\Gamma)$ is either $\Wone$-hard or $\NP$-hard for $k=0$.
This implies that if there is a $\delayFPT$ algorithm that enumerates all models of weight at least $k$ of a $\Gamma$-formula, then $\FPT=\Wone$ or even, in the second case, $\PTime=\NP$, hence the claim of our theorem will follow.

We now proceed to proving hardness of $\maxonesSAT^*(\Gamma)$.

If $\maxonesSAT(\Gamma)$ is $\Wone$-hard, then obviously so is $\maxonesSAT^*(\Gamma)$.
Therefore, according to \Cref{thm:classification_maxones} it remains to consider the case where $\Gamma$ is 1-valid but neither dual-Horn, nor affine, nor \stronglybijunctive. In this case $\maxonesSAT(\Gamma)$ is trivial, whereas $\maxonesSAT(\Gamma\cup\{0\})$ is hard. We will use the following fact: 
\begin{align}\label{eqn:red_maxones_maxonesstar}
 \maxonesSAT(\Gamma\cup\{0\})\redfpt\maxonesSAT^*(\Gamma\cup\{\imp\}).
\end{align}
The proof of this claim is easy. Given a $\Gamma\cup\{0\}$-formula $\varphi$ over the set of variables $\{x_1,\ldots, x_n\}$, let us consider the $\Gamma\cup\{\imp\}$-formula defined as $\varphi':=\varphi[f/0]\land \bigwedge_{i=1}^n \imp(f, x_i)$ where $f$ is a fresh variable. It is easy to see that there is a one-to-one correspondence between the models of $\varphi$ and those of $\varphi'$ that set $f$ to $0$, moreover the only model of $\varphi'$ that sets $f$ to 1 is the all-$1$ assignment. Therefore, $\varphi$ has a model of weight at least $k$ if and only if $\varphi'$ has one nontrivial model of weight at least $k$, thus proving the claim.

%
%
Making use of the above defined notion of freezing implementations, we obtain a possibility to get rid of the relation $\imp$ in (\ref{eqn:red_maxones_maxonesstar}):
\begin{align}\label{eqn:red_maxones_R_gamma}
\hbox{ If } R\in\clos\Gamma, \hbox{ then }\maxonesSAT^*(R)\redfpt\maxonesSAT^*(\Gamma).
\end{align}
Indeed the frozen implementation gives us a procedure to transform any $R$-formula into a satisfiability equivalent $\Gamma$-formula with existentially quantified variables. The fact that the implementation ``freezes'' the existentially quantified variables makes it possible to remove the quantifiers, while preserving the information on the weight of the  solutions. Thus, in order to prove that $\maxonesSAT^*(\Gamma)$ is hard we will have essentially two strategies:
\begin{itemize}
 \item either we exhibit a relation $R\in\clos\Gamma$ such that $\maxonesSAT^*(R)$ is hard and then we conclude thanks to
(\ref{eqn:red_maxones_R_gamma}),
 \item or we prove that $\imp\in\clos\Gamma$ and then we conclude thanks to (\ref{eqn:red_maxones_maxonesstar}) since $\maxonesSAT(\Gamma\cup\{0\})$ is hard. (In the case $\Gamma$ is complementive we use a symmetric version of implication $\symimp(x,y,z)=(z=0\land \imp(x,y))\lor (z=1\land \imp(y,x))$).
\end{itemize}
The rest of the proof consists in finding relevant implementations in a very standard way (see e.g., \cite{crkhsu01}), therefore we here give only a sketch. 
%
%
%
Suppose for instance that $\Gamma$ is not 0-valid (the other cases can
 be dealt with in a similar manner).
 It is easy to show that
 $\true\in\clos\Gamma$. Let us first consider 
   $R\in\Gamma$ a non-dual-Horn relation of arity $m$. Consider the
 constraint $C=R(x_1,\ldots, x_m)$. Since $R$ is non-dual-Horn there
 exist  $m_1$ and $m_2$ in $R$ such that
  $m_1\lor m_2\notin R$. For $i,j\in\{0,1\}$, set $\displaystyle V_{i,j}=\{x\mid
 x\in V,
 m_1(x)=i\land m_2(x)=j\}$. Consider the $\{R\}$-constraint:
 $M(x,y,z,t)=C[x/V_{0,0},\, y/V_{0,1},\,
 z/V_{1,0},\, t/V_{1,1}].$ 
 Now, let  $R'\in\Gamma$ a non-affine relation of arity $m'$. Consider the
 constraint $C=R(x_1,\ldots, x_{m'})$. Since $R'$ is non-affine and 1-valid there
 exist  $m'_1$ and $m'_2$ in $R'$ such that
  $(m'_1\oplus m'_2\oplus (1,\ldots, 1)\notin R$. For $i,j\in\{0,1\}$, set
 $\displaystyle V_{i,j}=\{x\mid
 x\in V,
 m'_1(x)=i\land m'_2(x)=j\}$. Consider the $\{R'\}$-constraint:
 $M'(x,y,z,t)=C[x/V_{0,0},\, y/V_{0,1},\,
 z/V_{1,0},\, t/V_{1,1}].$  Finally consider the ternary relation $Q$ defined by
 $Q(x,y,z)=\exists t M(x,y,z,t)\land M'(x,y,z,t)\land \true(t)$. Clearly
 $Q\in\clos\Gamma$. Moreover, by construction the relation $Q$ contains the
 tuples $0011$, $0101$ and $1111$, and does   contain neither  $0111$ (because of
 the constraint $M$), nor $1001$ (because of $M'$).  There are therefore three
  tuples for which we do not know whether they belong to $Q$ or not, and this
 makes 8 cases to investigate. It is easy to check that $\imp\in\clos Q$, and
 hence $\imp\in\clos\Gamma$, in all cases except when $Q=\{001, 010, 111\}$ or
 $Q=\{001, 010, 111, 110\}$. For the six cases such that $\imp\in\clos\Gamma$ we
 conclude with (\ref{eqn:red_maxones_maxonesstar}). In the two remaining cases,
 it is easy  to verify  that $\satstar(Q)$ is $\NP$-hard in using
 \cite{CreignouH97} ($Q$ is non Schaefer), thus we deduce that
 $\maxonesstarSAT(Q)$ is $\NP$-hard for $k=0$, and hence we conclude with
 (\ref{eqn:red_maxones_R_gamma}).
\qed
\end{proof}
A more detailed version of his proof will be included in the journal version of the paper.

\subsection{Enumeration of strong \HORN-backdoor sets}
\label{subsec:sbds}

We consider here the enumeration of strong backdoor sets. Let us introduce some relevant terminology \cite{wigose03}. Consider a formula $\phi$, a set $V$ of variables of $\phi$, $V\subseteq \var(\phi)$. For a truth assignment $\tau$, $\phi(\tau)$ denotes the result of removing all clauses from $\phi$ which contain a literal $x$ with $\tau(x) = 1$ and removing literals $y$ with $\tau(y) = 0$ from the remaining clauses. 

The set $V$ is a \textit{strong \HORN-backdoor set} of $\phi$ if for all truth assignment $\tau\colon V\rightarrow\{0,1\}$ we have $\phi(\tau)\in\HORN$. 
Observe that equivalently $V$ is a strong \HORN-backdoor set of $\phi$ if $\phi|_V$ is \HORN, where $\phi|_V$ denotes the formula obtained from $\phi$ in deleting in $\phi$ all occurrences of variables from $V$. 

Now let us consider the following enumeration problem.
\enumproblem
{\exactSBDS{$\HORN$}}
{A  formula $\phi$ in CNF}
{$k$}
{The set of all strong \HORN-backdoor sets of $\phi$ of size exactly $k$}

From \cite{nirasz04} we know that detection of strong \HORN-backdoor sets is in $\FPT$. 
In using a variant of bounded-search tree the authors use in their $\FPT$-algorithm, together with self-reducibility we get an efficient enumeration algorithm for all strong \HORN-backdoor sets of size $k$.

\begin{theorem} \label{prop:sbds_delayFPT}
\exactSBDS{$\HORN$} is in $\delayFPT$.
\end{theorem}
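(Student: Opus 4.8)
The plan is to combine the known fixed-parameter tractable detection algorithm for strong $\HORN$-backdoor sets with the self-reducibility paradigm, exactly as announced in the paragraph preceding the statement. The detection result from \cite{nirasz04} gives a bounded-search-tree algorithm that, in $\FPT$ time, decides whether a CNF formula $\phi$ has a strong $\HORN$-backdoor set of size at most $k$. The first step is to note that this decision procedure can be turned into a self-reduction: to enumerate all strong $\HORN$-backdoor sets of size \emph{exactly} $k$, I would build a search tree over the variables of $\phi$, at each node deciding whether a given variable $v$ is placed into the backdoor set under construction or excluded from it, and using the detection oracle to prune branches that cannot be completed to a valid backdoor set of the required size.

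Concretely, I would design a recursive procedure analogous to $\generate$ in \Cref{algo:generate-sat}. It maintains a partial backdoor set $B$ (variables committed to the set) and a set $F$ of forbidden variables (committed to be outside), and it branches on the next free variable. Before descending into a branch, it calls a subroutine $\mathtt{HasBackdoor}(\phi, B, F, k)$ that tests, in $\FPT$ time, whether there exists a strong $\HORN$-backdoor set $V$ of size exactly $k$ with $B \subseteq V$ and $V \cap F = \emptyset$; this test is the self-reducibility analogue of $\mathtt{HasMaxOnes}$. The crucial observation making this an $\FPT$-delay enumeration is that this constrained detection problem is still fixed-parameter tractable: fixing some variables inside and some outside the backdoor set is a mild modification of the bounded-search-tree algorithm of \cite{nirasz04}, since forcing $v \in V$ simply removes $v$ from the formula (it is ``resolved'' by being in the backdoor), and forcing $v \notin V$ restricts the subsequent search. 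Because every internal branching node either makes progress toward a committed size-$k$ set or is pruned by a failed oracle call, and because the search tree has depth bounded in terms of $k$ between two consecutive outputs, each delay is bounded by $t(k) \cdot p(|\phi|)$ for a computable $t$ and polynomial $p$, which is precisely the $\delayFPT$ requirement. The enumeration is without repetition because each leaf corresponds to a distinct $\{B, F\}$ commitment, hence to a distinct backdoor set.

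The main obstacle I expect is verifying that the constrained detection subroutine $\mathtt{HasBackdoor}(\phi, B, F, k)$ genuinely remains in $\FPT$, and in particular that the bounded-search-tree strategy of \cite{nirasz04} survives the imposition of ``must-include'' and ``must-exclude'' constraints on variables without blowing up the parameter. The search tree in that algorithm branches on clauses that violate the $\HORN$ property (a clause with two positive literals forces at least one of its variables into the backdoor set); I must check that with forbidden variables $F$, such a forcing rule still yields a branching of bounded width and bounded total depth, so that the overall running time stays of the form $g(k)\cdot \mathrm{poly}(|\phi|)$. A secondary technical point is ensuring the absence of duplicate outputs and the correct handling of the \emph{exact} size-$k$ constraint (as opposed to ``at most $k$''): I would enforce this by only emitting $B$ at a leaf when $|B| = k$ and all remaining free variables have been explicitly forbidden, so that the solution sets enumerated in sibling subtrees are disjoint. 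Once these two points are established, the claimed $\delayFPT$ bound follows directly from the structure of the self-reduction.
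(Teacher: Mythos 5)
Your proposal is correct and follows essentially the same route as the paper: a self-reduction that branches on whether the next variable enters the backdoor set, pruned by a constrained detection oracle in which must-include variables are handled by deleting their occurrences from the formula (i.e., passing $\phi|_B$) and must-exclude variables by restricting the bounded-search-tree algorithm of \cite{nirasz04} to the still-admissible variables. Your $\mathtt{HasBackdoor}(\phi,B,F,k)$ is exactly the paper's \texttt{Exists-SBDS}$(\phi|_B,k,V)$ with $V$ the complement of $B\cup F$, and the point you flag as the main obstacle---that the branching on clauses with two positive literals survives the exclusion constraints with width at most $2$ and depth at most $k$---is precisely what the paper's modified pseudocode establishes.
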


\begin{proof}
The procedure {\tt GenerateSBDS}$(\phi,B, k, V)$ depicted in \Cref{algo:generate-sat} enumerates all sets $S\subseteq V$ of size $k$ such that $B\cup S$ is a strong \HORN-backdoor set for $\phi$, while the function {\tt Exists-SBDS}$(\phi, k, V)$ tests if $\phi$ has a strong \HORN-backdoor set of size exactly $k$ made of variables from $V$.

\begin{algorithm}[ht!]
\KwIn{A formula $\phi$, an integer $k$}
\KwOut{All strong \HORN-backdoor sets of size $k$.}
\SetKwFunction{existsSBDS}{Exists-SBDS}
\SetKwFunction{generateSBDS}{GenerateSBDS}
\nl\lIf{$\existsSBDS(\phi, k, \var(\phi))$}{$\generateSBDS(\phi, \emptyset, k, \var(\phi))$}

\BlankLine\BlankLine\BlankLine
\proce $\generateSBDS(\phi, B, k, V):$
\BlankLine \setcounter{AlgoLine}{0}

\nl\lIf{$k=0$ or $V=\emptyset$}
 {
 	\Return $B$
 }
 
\nl\Else
{
	\nl\If{$\existsSBDS(\phi|_{B\cup\{\min(V)\}}, k-1,V\setminus\{\min(V)\})$}
 	{
 		\nl$\generateSBDS(\phi, B\cup\{\min(V)\}, k-1,
V\setminus\{\min(V)\})$
 	}
 	\nl\If{$\existsSBDS(\phi|_B, k, V\setminus\{\min(V)\})$}
 	{
 		\nl$\generateSBDS(\phi, B, k, V\setminus\{\min(V)\})$
 	}
}

\BlankLine\BlankLine\BlankLine
\func $\existsSBDS(\phi, k, V):$
\BlankLine \setcounter{AlgoLine}{0}

\nl\If{$k=0$ or $V=\emptyset$}
{
	\nl\lIf{$\phi|_V\in\HORN$}
	{
		\Return true
	}
	\lElse
	{
		\Return false
	}
}
\nl\uIf{there is a clause $C$ with two  positive literals $p_1,p_2$}
{
	\nl\uIf{exactly one of $p_1$ and $p_2$ is in $V$, say $p_1\in V, p_2\notin V$}
	{
		\nl\lIf{$\existsSBDS(\phi|_{\{p_1\}}, k-1, V\setminus\{p_1\})$}
		{
			\Return true
		}
	}
	\nl\Else
	{
		\nl\If{$p_1\in V$ and  $p_2\in V$}
		{
			\nl\lIf{$\existsSBDS(\phi|_{\{p_1\}},k-1, V\setminus\{p_1\})$}
			{
				\Return true
			}
			
			\nl\lIf{$\existsSBDS(\phi|_{\{p_2\}}, k-1, V\setminus\{p_2\})$}
			{
				\Return true
			}
		}
	}
	{
		\nl\Return false 
	}
}
\nl\lElse{\Return true}
\label[algorithm]{algo:generate-hbd}
\caption{Enumerate all strong \HORN-backdoor sets of size $k$}
\end{algorithm}

 

 The point that this algorithm is indeed in $\delayFPT$ relies on the fact that the function {\tt Exists-SBDS} depicted in \Cref{algo:generate-hbd} is in $\FPT$. This function is an adaptation of the one proposed in \cite{nirasz04}.
There Nishimura et al.\ use an important fact holding for non-\HORN clauses (i.e., clauses contains at least two positive literals): if $p_1,p_2$ are two positive literals then either one of them must belong to any strong backdoor set of the complete formula.
 
 In their algorithm they just go through all clauses for these occurrences. However for our task, the enumeration of the backdoor sets, it is very important to take care of the ordering of variables. The reason for this is the following. Using the algorithm without changes makes it impossible to enumerate the backdoor sets because wrong sets would be considered: e.g., for some formula $\phi$ and variables $x_1,\dots,x_n$ let $B=\{x_2,x_4,x_5\}$ be the only strong backdoor set. Then, during the enumeration process, one would come to the point where the sets with $x_2$ have been investigated (our algorithm just enumerates from the smallest variable index to the highest). When we start investigating the sets containing $x_4$, the procedure would then wrongly say "yes there is a backdoor set containing $x_4$" which is not desired in this situation because we finished considering $x_2$ (and only want to investigate backdoor sets that do not contain $x_2$).
 
 Therefore the algorithm needs to consider only the variables in the set $V$ where in each recursive call the minimum  variable (i.e., the one with smallest index) is removed from the set $V$ of considered variables.
\qed
%

%
%
\end{proof}

\section{Conclusion}

We made a first step to develop a computational complexity theory for parameterized enumeration problems by defining a number of, as we hope, useful complexity classes.
We examined two design paradigms for parameterized algorithms from the point of view of enumeration. Thus we obtained a number of upper bounds and also some lower bounds for important algorithmic problems, mainly from the area of propositional satisfiability. 

 As further promising problems we consider the cluster editing problem \cite{damaschke06} and the $k$-flip-SAT problem \cite{szeider11}.

Of course it will be very interesting to examine further algorithmic paradigms for their suitability to obtain enumeration algorithms. 
Here, we think of the technique of bounded search trees and the use of structural graph properties like treewidth.\bigskip

\noindent\textbf{Acknowledgements.}\\
We are very thankful to Fr\'ed\'eric Olive for helpful discussions. We also acknowledge many helpful comments from the reviewers.

\bibliographystyle{plain}
\bibliography{enum_kernel}
\end{document}